\documentclass[preprint,3p,twocolumn]{elsarticle}   % 删除了 'times' 选项（elsarticle 默认使用 Times 字体，无需指定）

\usepackage{amssymb}
\usepackage{amsmath}
\usepackage{amsthm}

\usepackage{xcolor}

\usepackage{subcaption}   % 只保留一次
\usepackage{mathrsfs}
\usepackage{mathdots}
\usepackage{stfloats}     % 可选，提供浮动体选项

\newtheorem{remark}{Remark}

\newtheorem{notation}{Notation}
\newtheorem{theorem}{Theorem}
\usepackage{natbib}
\biboptions{sort&compress}

\journal{International Journal of Systems Science}
\begin{document}

\begin{frontmatter}

\title{Traffic Congestion Control for ARZ Model with an Arbitrarily Large Input Delay \tnoteref{label1}}

\tnotetext[label2]{This work was supported by the National Natural Science Foundation of China under Grants 62373314 and the Research Grants Council of the Hong Kong Special Administrative Region of China under Grant CityU/11210424.}

\author[1]{Yidan Cao}\ead{yidancao4-c@my.cityu.edu.hk}

\author[2]{Xiang Xu}
\ead{xux3@sustech.edu.cn}

\author[1]{Lu Liu\corref{cor1}}
\ead{luliu45@cityu.edu.hk}

\cortext[cor1]{Corresponding author.}

\affiliation[1]{organization={Department of Mechanical Engineering, City University of Hong Kong},
                addressline={Kowloon}, 
                postcode={999077}, 
                state={Hong Kong Special Administrative Region},
                country={China}}

\affiliation[2]{organization={School of Automation and Intelligent Manufacturing, Southern University of Science and Technology},
% addressline={}, 
city={Shenzhen},
postcode={518055}, 
state={Guangdong},
country={China}}

\begin{abstract}
This paper addresses the stabilization problem for Aw-Rascle-Zhang (ARZ) traffic model in the presence of an arbitrarily large input delay. The linearized ARZ model is a $2 \times 2$  hyperbolic partial differential equation (PDE) system with proximal reflection,  which introduces significant analytical challenges when combined with input delays. To tackle this problem, we propose a backstepping-based boundary controller capable of stabilizing the linearized ARZ model under these conditions. 
The input delay is modeled as a transport PDE, which reformulates the entire system into a $3 \times 3$ hyperbolic PDE system.
A backstepping transformation is designed to map the original system into a stable target system, enabling the design of a delay-compensated controller.
A key technical contribution of this work is that for hyperbolic PDEs with delays, we develop a characteristic-region-wise construction  for kernel functions subject to two boundary constraints and close the proof via successive approximation.
Another contribution is that we utilize the small-gain theorem for input-to-state stability (ISS) of hyperbolic PDEs. 
Two simulations are provided to illustrate the effectiveness of the proposed delay-compensated controller: one compares it with a controller without compensation, and the other employs real traffic vehicle data to validate its effectiveness.
\end{abstract}

\begin{keyword}                          
Traffic congestion control;
Backstepping control for PDEs;
Arbitrarily large input delay; 
Proximal reflection;
ISS small-gain theorem
\end{keyword}

\end{frontmatter}

\section{Introduction}
Partial differential equations (PDEs) are fundamental tools in many scientific and engineering disciplines due to their ability to model  complex physical phenomena such as fluid flow convection \cite{mojgani2021low}, Euler-Bernoulli beam \cite{mei2023exponential}, flexible beams or plates \cite{avalos2019stability}, electromagnetic or acoustic oscillation \cite{kalman1963mathematical}, heat conduction \cite{colton1977solution}, and traffic flow \cite{yu2019traffic}. Since the 1960s, control of PDE systems has been a challenging research area in control theory \cite{kalman1963mathematical} due to the inherent complexity of infinite-dimensional dynamics.

Among these applications, the modeling and control of traffic flow is a particularly important area, where PDE-based models are essential for traffic congestion problems \cite{richards1956shock, lighthill1955kinematic, payne1971model}. One widely adopted model for traffic flow is the Aw-Rascle-Zhang (ARZ) model \cite{aw2000resurrection,zhang2002non}, a second-order nonlinear hyperbolic PDE system. 
Two primary strategies for traffic control using the ARZ model are ramp metering, which acts as a boundary control input, and variable speed limits. This paper focuses on ramp metering to address traffic congestion problems. Notable works in this domain include 
stabilization problem of linearized ARZ model \cite{yu2019traffic}, output feedback control of two-lane traffic congestion \cite{yu2021output}, traffic flow control on cascaded roads by event-triggered \cite{espitia2022traffic},  neural-operator-based traffic congestion control \cite{zhang2024neural,zhang2025mitigating,lv2025neural}, control and regulation of mixed-autonomy traffic PDE systems \cite{zhang2024mean,zhang2025event,lv2025boundary}, event-triggered control \cite{zhang2025performance,wang2025event} and stabilization problem \cite{wang2025exponential}.
However, a critical limitation of these approaches is their inability to account for time delays, which are prevalent in real-world traffic systems. 

Time delays are a major concern in traffic control systems as they arise from various sources, including signal transmission latency  and computational overhead. These delays can severely degrade the performance of existing controllers and exacerbate the stop-and-go phenomenon, thereby significantly reducing traffic flow efficiency. 
However, although there exist studies on time delays in parabolic PDEs \cite{koudohode2024event}, the significant differences between parabolic and hyperbolic PDE  systems lead to a key issue: for hyperbolic PDEs with time delays, the kernel functions obtained after the backstepping transformation are subject to two boundary conditions, which necessitates a piecewise analysis. This feature is not addressed in  \cite{zhang2023robust}. Moreover, most existing research on time delays in hyperbolic PDEs focuses on delay robustness \cite{auriol2018delay},  
delay-adaptive boundary control \cite{wang2024delay1,wang2024delay2}, and spatially varying delays \cite{qi2024neural,qi2025neural}, while to the best of our knowledge, the case of arbitrarily large delays has not been addressed in the specific setting of non-strict-feedback systems.

In addition to time delays, the ARZ model is characterized as a nonlinear hyperbolic PDE system with proximal reflections (the reflection at the actuated boundary), as introduced in  \cite{auriol2018delay}. However, most existing methodologies, such as \cite{zhang2017necessary, zhang2023robust}, are developed for  strict-feedback hyperbolic PDE systems without boundary reflection terms. The proximal reflection term represents the boundary coupling induced by
the outlet ramp-metering action, as demonstrated in \cite{fan2013comparative}. Although recent progress has been made in developing delay-robust control for strict-feedback hyperbolic PDEs \cite{auriol2019delay} and in handling proximal reflection \cite{auriol2018delay}, the combined effects of input delays and proximal reflection in hyperbolic PDE systems have not been studied. When both proximal reflection and arbitrarily large input delay occur, the PDE system becomes a non-strict feedback system which leads to difficulties in designing controllers and the presence of proximal reflection introduces strong coupling terms that render conventional Lyapunov-based analysis intractable.

Consequently, these gaps motivate the present work.
We establish a backstepping transformation for controller design and introduce the input-to-state stability (ISS) method for closed-loop system stability analysis.
Originally proposed in \cite{sontag1989smooth} for finite-dimensional systems, ISS theory was later extended to PDEs \cite{mironchenko2015construction}, enabling the small-gain theorem to be applied to interconnected PDE systems \cite{mironchenko2017characterizations}.
Recent advances have further generalized the small-gain theorem to accommodate PDEs with boundary disturbances \cite{karafyllis2016iss}, and non-local boundary conditions \cite{karafyllis2017iss}.

This paper builds upon these developments to address the stabilization problem for traffic flow systems governed by the ARZ model with arbitrarily large input delay. We propose a backstepping-based boundary control framework to stabilize a $2 \times 2$ hyperbolic PDE system with proximal reflection and arbitrarily large input delays. 
The main contributions of this work can be summarized as follows:

 First, a novel well-posedness design for the delay-compensating kernel system arising in the hyperbolic PDEs is established.
 Unlike standard single-boundary settings, the kernel equations here exhibit a two-boundary  structure induced by delayed boundary actuation. We therefore develop a characteristic-region-wise construction for kernel functions subject to two boundary constraints and close the proof via successive approximation, thereby obtaining the well-posedness of the kernel solution.

Second, a delay-compensated boundary controller is designed. 
While \cite{yu2019traffic} does not account for input delays and
\cite{zhang2023robust} does not address the PDE systems with
proximal reflection, this work simultaneously considers input delay and proximal reflection in the ARZ ramp-metering setting.
To exponentially stabilize the closed-loop system, we construct a delay-compensating backstepping transformation and design a delay-compensated boundary controller.

Third, the closed-loop stability is established using the ISS small-gain
approach. Although the construction of Lyapunov functions is a conventional methodology, it is particularly challenging for the $2 \times 2$ hyperbolic PDEs with proximal reflections, because of the strong coupling caused by proximal reflection and the problems brought about by time-delay. To address this limitation, we employ an ISS small-gain approach to establish stability. Compared to conventional Lyapunov-based methods, the small-gain framework offers greater flexibility for handling the strong coupling effects such as the effect of proximal reflection. This extension leads to a stability criterion for the closed-loop system.

The rest of this paper is organized as follows: Section \ref{2} presents the problem formulation and introduces the transformation of input delay to a transport PDE. In Section \ref{3}, we develop a backstepping-based compensation controller, establish the well-posedness of kernel functions and provide a rigorous stability analysis. Section \ref{4} provides numerical simulations to demonstrate the effectiveness of the proposed controller. Finally, conclusions and perspectives are given in Section \ref{5}.

\begin{notation}
We denote  $\mathcal{L}^{\infty}(a,b)$  the space of bounded real-valued functions defined on  $[a,b]$  with the standard  $\mathcal{L}^{\infty}$  norm, i.e., for any $f \in \mathcal{L}^{\infty}(a,b)$
\begin{align}
&\|f\|_{L^{\infty}}=\sup _{x \in[a,b]}\|f(x)\|.
\end{align}

The exponentially weighted norm is
\begin{align}
\|h\|_{\mu,t}
=
\sup_{0\leq s\leq t}
e^{\mu s}\|h(\cdot,s)\|_{L^\infty(0,L)},
\qquad \mu>0.
\end{align}
Hence,
\begin{align}
\|h(\cdot,t)\|_{L^\infty(0,L)}
\leq
e^{-\mu t}\|h\|_{\mu,t}.
\end{align}
\end{notation}

\section{Problem Formulation}\label{2}
In this section, we first introduce the ARZ traffic model and describe how an arbitrarily large input delay can be represented as a transport PDE. We then transform the original system to an intermediate system that facilitates the design of the target system.

\begin{figure}
    \centering
    \includegraphics[width=0.5\linewidth]{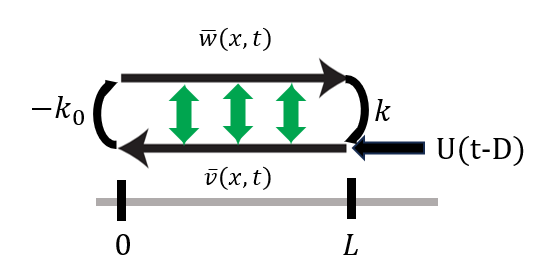}
    \caption{System diagram of original time-delay system}
    \label{fig1}
\end{figure}

\begin{figure}
    \centering
    \includegraphics[width=0.7\linewidth]{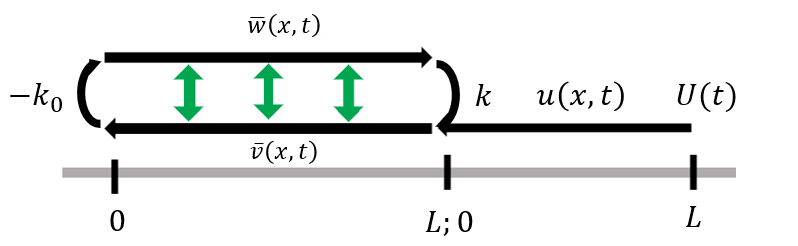}
    \caption{System diagram after converting time delay into transport PDEs}
    \label{fig2}
\end{figure}

\subsection{ARZ model}

The macroscopic behavior of traffic along a roadway is governed by the ARZ model \citep{aw2000resurrection, zhang2002non}, which consists of a set of nonlinear, coupled hyperbolic PDEs. This model describes the spatiotemporal evolution of traffic dynamics in the domain $(x, t) \in[0, L] \times[0,+\infty)$. The model is defined by
\begin{align}
&\partial_{t} \rho+\partial_{x}(\rho v) =0, \label{arz1}\\
&\partial_{t}(v-V(\rho))+v \partial_{x}(v-V(\rho))  =\frac{V(\rho)-v}{\tau},\label{arz2}
\end{align}
where $V(\rho)$ is the equilibrium traffic speed profile, $\rho(x, t)$ represents the traffic density and $v(x, t)$ represents the traffic speed. The parameter $\tau \in \mathbb{R}>0$ captures the driver's reaction time. To ensure realistic behavior, the flow function $q(\rho)=\rho V(\rho)$ is required to be strictly concave, satisfying $q^{\prime \prime}(\rho)<0$ \citep{zhang2025mitigating}. The equilibrium velocity-density relationship $V(\rho)$ is given in the form of Greenshield's model \citep{greenshields1935study}, given by
\begin{align}
V(\rho)=v_{f}-p(\rho)=v_{f}\left(1-\left(\frac{\rho}{\rho_{m}}\right)^{\gamma}\right),
\end{align}
where  $\gamma \in \mathbb{R}_+$, $v_{f}$  is the maximum speed for the traffic flow,  $\rho_{m}$  denotes the maximum density and $p(\rho)$ is defined as the traffic pressure, an increasing function of density:
\begin{align}
    p(\rho)=v_{f}\left(\frac{\rho}{\rho _m}\right)^\gamma,
\end{align}
where $\rho_m$ is the maximum density, thus $V(\rho_m)=0$.
The traffic pressure  $p(\rho)$  and flux  $q$  are related by
\begin{align}
p=\frac{v_{f}}{{\rho_{m}}^{\gamma}}\left(\frac{q}{v}\right)^{\gamma} .
\end{align}

The system's equilibrium, defined as the steady state $(\rho^{\star}, v^{\star})$, is derived from this fundamental diagram, establishing the necessary relation between the equilibrium speed and density. 
\begin{align}
v^{\star}=V\left(\rho^{\star}\right).
\end{align}
For the freeway traffic under consideration, the inlet boundary at $x=0$ is maintained at a constant equilibrium flow, defined as $q^{\star}=\rho^{\star} v^{\star}$. This leads to the following inlet boundary condition:
\begin{align}
\rho(0, t)=\frac{q^{\star}}{v(0, t)}.\label{rho}
\end{align}
At the outlet of the road section, we set the traffic density as  $\rho^{\star} $ to obtain the following boundary condition for traffic speed
\begin{align}
v(L, t)=\frac{q(L, t)}{\rho^{\star}}+U(t),\label{boundv}
\end{align}
where $U(t)$ denotes the control input, which manipulates the outflow from the road section. 

Our control strategy targets the traffic flow upstream of a ramp meter (UORM), where the control input $U(t)$ is applied physically at the downstream outlet of the domain via ramp metering to regulate the discharge flow. This setup yields a closed-loop system in which upstream measurements influence downstream actuation. The objective is to drive both traffic density and speed to their equilibrium values $(\rho^{\star}, v^{\star})$.

\subsection{Linearizing the ARZ model}
The PDE system \eqref{arz1}-\eqref{arz2} is first linearized around its equilibrium point $(q^{\star}, v^{\star})$. The deviations from this equilibrium are defined as:
\begin{align}
\tilde{q}(x, t)=q(x, t)-q^{\star},\\
\tilde{v}(x, t)=v(x, t)-v^{\star}.
\end{align}
To enable the subsequent backstepping-based boundary control design, the linearized system is first reformulated into a target boundary control model. 
We define new variables $(\bar{w}, \bar{v})$ in Riemann coordinates,
\begin{align}
&\bar{w}=e^{\left(\frac{x}{\tau v^{\star}}\right)}\left(\tilde{q}-q^{\star}\left(\frac{1}{v^{\star}}-\frac{1}{\gamma p^{\star}}\right) \tilde{v}\right), \\
&\bar{v}=\frac{q^{\star}}{\gamma p^{\star}} \tilde{v}.
\end{align}

After linearizing the original system \eqref{arz1}-\eqref{arz2} with boundary conditions \eqref{rho}-\eqref{boundv} at the steady state $(p^{\star},v^{\star})$, we can obtain a linear $2\times 2$ hyperbolic PDE described by the state variables $\bar{w}(x, t)$ and $\bar{v}(x, t)$. We consider the following system
\begin{align}
\bar{w}_{t}(x, t)&=-v^{\star} \bar{w}_{x}(x, t), \label{orin} \\
\bar{v}_{t}(x, t)&=\left(\gamma p^{\star}-v^{\star}\right) \bar{v}_{x}(x, t)+c(x) \bar{w}(x, t),\label{orin1}\\ 
\bar{w}(0, t) &=-k_{0} \bar{v}(0, t), \label{bond1}\\
\bar{v}(L, t) &=k \bar{w}(L, t)+U(t-D),\label{output}
\end{align}
where $D>0$ denotes an arbitrarily large input delay, with $v^{\star}>0,\left(\gamma p^{\star}-v^{\star}\right)>0$ and
\begin{align}
c(x)&=-\frac{1}{\tau} \exp \left(-\frac{x}{\tau v^{\star}}\right),\\
k_{0} &=\frac{\gamma p^{\star}-v^{\star}}{v^{\star}},\label{k0}  \\
k &=e^{ \left(-\frac{L}{\tau v^{\star}}\right)}.\label{k}
\end{align}
Since the model represents a section of highway, we assume $L> 0$.

\subsection{Arbitrarily large input delay }

Based on the framework in \cite{krstic2009delay}, we transform the arbitrarily large input delay into a transport equation, which is a first-order hyperbolic PDE.
This transformation maps the system in Fig. \ref{fig1} to the equivalent block diagram in Fig. \ref{fig2}.

The input delay can be transformed into the following PDE
\begin{align}
U(t) &=u(L,t),\label{u}\\
\frac{D}{L}u_{t} &= u_{x}.
\end{align}
Therefore, the control input with arbitrarily large input delay can be written as
\begin{align}
u(0,t) &=U(t-D).\label{ubond}
\end{align}
Thus, the system \eqref{orin}-\eqref{output} can be rewritten as follows
\begin{align}
\bar{w}_{t}(x, t) &=-v^{\star} \bar{w}_{x}(x, t), \label{origin1}\\
\bar{v}_{t}(x, t) &=\left(\gamma p^{\star}-v^{\star}\right) \bar{v}_{x}(x, t)+c(x) \bar{w}(x, t), \\
\bar{w}(0, t) &=-k_{0} \bar{v}(0, t), \\
\bar{v}(L, t) &=k \bar{w}(L, t)+u(0,t),\\
\frac{D}{L}u_{t} &= u_{x},\\
u(L,t) &= U(t).\label{origin2}
\end{align}

\section{Main results} \label{3}

\subsection{Backstepping-based controller design}
We introduce three integral transformations to map the original system \eqref{origin1}-\eqref{origin2} to a target system \eqref{target1}-\eqref{target2}. The transformations can be constructed as follows
\begin{align}
\alpha \left( x,t \right) &=\bar{w} \left( x,t \right ), \label{map1}\\ 
\beta  \left ( x,t \right ) &=\bar{v} \left ( x,t \right ) - \int_{0}^{x} K^{11} \left ( x,y \right ) \bar{w}\left ( y,t \right )dy  \notag \\
&-\int_{0}^{x}K^{12}\left ( x,y \right ) \bar{v} \left ( y,t \right )dy,\label{map2} \\
z \left ( x,t \right ) &=u \left ( x,t \right )- \int_{0}^{x} P(x,y) u\left ( y,t \right )dy \notag \\
&-\int_{0}^{L} M \left ( x,y \right ) \bar{w}\left ( y,t \right )dy\notag \\
&- \int_{0}^{L}N\left ( x,y \right ) \bar{v} \left ( y,t \right )dy,\label{map3} 
\end{align}
where $(x,t)\in[0,L]\times[0,+\infty)$,
the kernels $K^{1j}$, $j=1,2$, and $P$ are defined on the triangular
domain
$\mathcal T_1=\{(x,y)\in[0,L]^2:\ 0\le y\le x\le L\}$,
whereas $M$ and $N$ are defined on the rectangular domain
$\mathcal T_2=\{(x,y)\in[0,L]^2:\ 0\le x\le L,\ 0\le y\le L\}$.

According to \eqref{map1}-\eqref{map3}, the original system \eqref{origin1}-\eqref{origin2} can be transformed into the following target system
\begin{align}
{\alpha }_{t}(x, t) & =-v^{\star} {\alpha }_{x}(x, t), \label{target1}\\ 
{\beta }_{t}(x, t) & =\left(\gamma p^{\star}-v^{\star}\right) {\beta }_{x}(x, t),\label{targetv} \\
{\alpha}(0, t) & =-k_{0} {\beta }(0, t),\label{targetbound1} \\
{\beta }(L, t) &=k{\alpha}(L, t) + z(0,t),\label{targetbound}\\
\frac{D}{L}z_{t} &= z_{x},\label{targetz}\\
z(L,t) &=U(t)-\int_{0}^{L} P(L,y) u(y, t) \mathrm{d} y \notag \\ 
&-\int_{0}^{L} M(L, y) \bar{w}(y, t) \mathrm{d} y \notag \\ 
&-\int_{0}^{L} N(L, y) \bar{v}(y, t) \mathrm{d} y .\label{target2}
\end{align}
By differentiating the mappings \eqref{map1}-\eqref{map2} with respect to time and domain, applying integration by parts and incorporating boundary conditions, the resulting constraints on the kernel functions guarantee a proper mapping from the original system to the target system. 
The kernel functions must satisfy
\begin{align}
&\left(\gamma p^{\star}-v^{\star}\right)K_{x}^{11}(x,y)-v^{\star}K_{y}^{11}(x,y)\notag \\
&-c(y)K^{12}(x,y) =0, \label{k1}\\
&K_{y}^{12}(x,y)+K_{x}^{12}(x,y) =0,\label{k2}
\end{align}
with the boundary conditions given by
\begin{align}
c(x)+\gamma p^{\star}K^{11}(x,x) &=0,\\
(\gamma p^{\star}-v^{\star})K^{12}(x,0)+k_{0}v^{\star}K^{11}(x,0) &=0.
\end{align}
After simplification, we obtain
\begin{align}
K^{11}(x,x) &=-\frac{c(x)}{\gamma p^{\star}},\\
K^{12}(x,0) &=-\frac{k_{0}v^{\star}}{\gamma p^{\star}-v^{\star}}K^{11}(x,0). 
\end{align}
Notice that \eqref{k1}-\eqref{k2} can be written as two separate $2 \times 2$ hyperbolic systems involving $K^{11}$ and $K^{12}$.
The well-posedness of the kernel equations $K^{1j}, j=1,2$ is established using an approach analogous to the Appendix of \cite{vazquez2011backstepping}.

Differentiating the mapping  \eqref{map3} with respect to space and time yields the following equations for $M(x,y),N(x,y)$ 
\begin{align}
\frac{D}{L} v^{\star}M_{y}(x,y)-M_{x}(x,y)+\frac{D}{L}N(x,y)c(y) &=0, \label{Mxy}\\
\frac{D}{L}(\gamma p^{\star}-v^{\star})N_{y}(x,y)+N_{x}(x,y) &=0,\label{Nxy}
\end{align}
with boundary conditions
\begin{align}
\frac{D}{L}\left(\gamma p^{\star}-v^{\star}\right)N(x,L) &=\frac{Dv^{\star}}{kL}M(x,L),\label{M} \\
k_{0}v^{\star}M(x,0) &=-\left(\gamma p^{\star}-v^{\star}\right)N(x,0),\label{N}\\
M(0,y)&=K^{11}(L,y), 
\label{Mleft}\\
N(0,y)&=K^{12}(L,y). \label{Nleft}
\end{align}

It can be further obtained that the kernel \(P(x,y)\) satisfies
\begin{align}
P_x(x,y)+P_y(x,y)&=0,\qquad (x,y)\in\mathcal T_1,
\label{Pxy}\\
P(x,0)&=\frac{D v^\star}{kL}M(x,L),\qquad x\in[0,L].
\label{P0}
\end{align}

Solving \eqref{Pxy} along the characteristic curves, and using \eqref{P0}, we obtain, for \((x,y)\in\mathcal T_1\),
\begin{align}
P(x,y)
=
P(x-y,0)
=
\frac{Dv^\star}{kL}M(x-y,L).
\label{Pexplicit}
\end{align}

Therefore, the well-posedness and boundedness of \(P(x,y)\) follow from
those of \(M(x,y)\). The well-posedness of  $M(x,y)$  and  $N(x,y)$ can be proven according to the following theorem.

\begin{theorem}\label{th3}
The kernel equations \eqref{Mxy}--\eqref{Nleft} have a unique bounded
solution in the characteristic sense, 
\[
M(x,y),\ N(x,y)\in L^\infty([0,L]\times[0,L]).
\]
Moreover, there exist constants \(\bar M,\bar N>0\) such that
\[
|M(x,y)|\leq \bar M,\qquad |N(x,y)|\leq \bar N,
\]
for  \((x,y)\in[0,L]\times[0,L]\).
\end{theorem}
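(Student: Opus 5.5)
We would prove Theorem \ref{th3} by regarding $x$ as a time-like variable and \eqref{Mxy}--\eqref{Nleft} as a $2\times 2$ initial-boundary value problem on the strip $y\in[0,L]$. Rewriting \eqref{Mxy}--\eqref{Nxy} gives
\begin{align*}
M_x-\tfrac{D}{L}v^\star M_y&=\tfrac{D}{L}c(y)N,\\
N_x+\tfrac{D}{L}(\gamma p^\star-v^\star)N_y&=0 .
\end{align*}
The characteristics of $M$ are $dy/dx=-\tfrac{D}{L}v^\star$, moving toward $y=0$, and those of $N$ are $dy/dx=\tfrac{D}{L}(\gamma p^\star-v^\star)$, moving toward $y=L$. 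Hence $M$ receives data from $x=0$ via \eqref{Mleft} and from the boundary $y=L$, where \eqref{M} gives $M(x,L)=\frac{k(\gamma p^\star-v^\star)}{v^\star}N(x,L)$. Likewise, $N$ receives data from $x=0$ via \eqref{Nleft} and from $y=0$, where \eqref{N} gives $N(x,0)=-\frac{k_0v^\star}{\gamma p^\star-v^\star}M(x,0)$. Both boundary conditions are therefore consistent with the direction of their incoming characteristics, and the problem is of the form ``initial data plus reflections at two boundaries.'' The initial data $K^{11}(L,\cdot)$ and $K^{12}(L,\cdot)$ are bounded by the well-posedness of the $K^{1j}$ kernels, which we take from the earlier discussion following \cite{vazquez2011backstepping}.

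\textbf{Step 1 (region decomposition).} Draw the characteristic lines of both families issuing from the corners $(0,0)$ and $(0,L)$, together with their successive reflections at $y=0$ and $y=L$. This partitions $[0,L]^2$ into finitely many regions $\mathcal R_1,\dots,\mathcal R_m$. Since the speeds are finite and nonzero and $x\in[0,L]$, the number of reflections is bounded by roughly $D\gamma p^\star/L$ divided by the minimal speed, hence finite. In each region, tracing back along characteristics ends either at $x=0$ or at a boundary point of a previously treated region.

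\textbf{Step 2 (integral equations).} Integrating along characteristics converts the system in each region into a pair of Volterra-type equations. In the first equation, $M(x,y)$ equals its entry value plus $\frac{D}{L}\int c(\cdot)N\,ds$ along the $M$-characteristic. The entry value is $K^{11}(L,\cdot)$ if the characteristic starts at $x=0$, and $\frac{k(\gamma p^\star-v^\star)}{v^\star}N$ evaluated at $y=L$ otherwise. In the second equation, $N$ is constant along its characteristic, equal either to $K^{12}(L,\cdot)$ or to $-\frac{k_0v^\star}{\gamma p^\star-v^\star}M$ at $y=0$.

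\textbf{Step 3 (successive approximation).} Substituting the $N$-relation into the $M$-equation gives a single linear fixed-point equation $M=F+\mathcal{T}M$ on each region. We would solve it by successive approximation $M^{n+1}=F+\mathcal{T}M^n$ and bound the increments by $C\frac{(\bar c\,Dx/L)^n}{n!}$ times a factor $\theta^{j}$, where $j$ is the number of reflections traversed and $\theta=kk_0$ or a similar product of the reflection gains. Since $j$ is bounded by the finite number from Step 1, the series converges uniformly on each region. Uniqueness follows from the same estimate applied to the difference of two solutions.

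Boundedness then gives $\bar M$ and $\bar N$ as the maxima over the finitely many regions, with $N$ controlled by $M$ through the reflection relation. We only claim $L^\infty$ regularity: the corner data, for example $M(0,L)$ versus $\frac{k(\gamma p^\star-v^\star)}{v^\star}N(0,L)$, need not satisfy compatibility conditions, so jumps propagate along the region boundaries. This is why the solution is understood in the characteristic sense. We expect the main obstacle to be the loop created by the two boundary couplings, $M\to N$ at $y=0$ and $N\to M$ at $y=L$, which prevents a one-pass Volterra argument. We would handle it by performing the induction region by region in increasing $x$, so that reflected data always come from regions already bounded. If this bookkeeping becomes unwieldy, the fallback is a contraction argument in the weighted norm $\sup e^{-\lambda x}|M|$ with $\lambda$ large, which absorbs the integral term; the pure-reflection part is finite-depth because the $x$-interval is finite.
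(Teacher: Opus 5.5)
Your plan is correct, but it is organized differently from the paper's proof. The paper avoids any two-dimensional region-by-region induction by using the fact that $N$ solves a homogeneous transport equation. Thus $N(x,y)=f(y-ax)$ with $a=\frac{D}{L}(\gamma p^\star-v^\star)$, and $f=K^{12}(L,\cdot)$ on $[0,L]$ by \eqref{Nleft}. The only unknown is therefore one function of one variable, $f$ on $(-aL,0)$, i.e.\ the trace $N(\cdot,0)$. $M$ is then given explicitly by Duhamel's formula along its characteristic, with $b=\frac{D}{L}v^\star$, in just two cases: $y+bx\le L$ (entering from $x=0$) and $y+bx>L$ (entering from $y=L$, where \eqref{M} gives $M=\frac{ka}{b}N$). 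Setting $y=0$ and imposing \eqref{N} yields a closed scalar equation for $f$. It contains a Volterra integral term and a reflected term whose argument is shifted back by $\frac{L}{a}+\frac{L}{b}$. A single contraction in $\sup_{0<x<L}e^{-\rho x}|f(-ax)|$ handles both terms, which cost $O(1/\rho)$ and $O(e^{-\rho(L/a+L/b)})$ respectively, and $M,N$ are then read off.

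The paper's reduction buys brevity and an explicit contraction constant, with no region ordering and no counting of reflections. Your route is the generic one that would still work if $N$ were also coupled in the domain, and it shows where the kernels may jump, at the cost of bookkeeping.

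On that bookkeeping, your Step 1 claim is not accurate for regions touching $y=0$. You say backward characteristics always end at $x=0$ or on a previously treated region, but there the source $\frac{D}{L}c(y)N$ couples $M(x,0)$ to $M(x',0)$ with $x'\to x$. It is this instantaneous Volterra coupling that forces a genuine fixed point inside each region. The reflection loop is not the obstacle: it carries the positive delay $\frac{L}{a}+\frac{L}{b}$ and round-trip gain $-kk_0$, so it can be handled by stepping. Your Step 3 does supply the needed fixed point, so the argument still closes. Likewise, in your weighted-norm fallback the reflected term is absorbed directly by a factor $|kk_0|e^{-\lambda L/a}$, so the finite-depth remark is unnecessary.
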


\begin{proof}
We begin by solving for \(N(x,y)\). Define
$a=\frac{D}{L}\left(\gamma p^\star-v^\star\right)$.
Equation \eqref{Nxy} becomes
\begin{align}
aN_y+N_x=0 . \label{aN}
\end{align}
The characteristic equation is
\begin{align}
\frac{dx}{1}=\frac{dy}{a},
\end{align}
which gives
\begin{align}
y-ax=\xi .
\end{align}
Thus,
\begin{align}
N(x,y)=f(y-ax), \label{Nf}
\end{align}
where \(f\) is to be determined.

From the boundary condition \eqref{Nleft},
\begin{align}
f(y)=K^{12}(L,y),\qquad 0\leq y\leq L. \label{fpositive}
\end{align}
Hence it remains to determine \(f(\xi)\) for \(-aL\leq \xi<0\).

Define
\[
b=\frac{D}{L}v^\star,\qquad d=\frac{D}{L}.
\]
Substituting \eqref{Nf} into \eqref{Mxy} gives
\begin{align}
bM_y-M_x=-dc(y)f(y-ax). \label{Msub}
\end{align}
The characteristic equation is
\begin{align}
\frac{dx}{1}=\frac{dy}{-b},
\end{align}
which gives
\begin{align}
y+bx=\eta .
\end{align}
Along the characteristic \(y(s)=y+b(x-s)\), we can obtain
\begin{align}
\frac{d}{ds}M(s,y+b(x-s))
=&dc(y+b(x-s))\notag \\
&f(y+bx-(a+b)s). \label{Mode}
\end{align}

If \(y+bx\leq L\), the characteristic starts from \(x=0\). Using
\eqref{Mleft}, we obtain
\begin{align}
M(x,y)
=&K^{11}(L,y+bx)
+d\int_0^x c(y+b(x-s)) \notag\\
&\times f(y+bx-(a+b)s)\,ds . \label{Mcase1}
\end{align}

If \(y+bx>L\), the characteristic starts from \(y=L\). Let
$\sigma=x-\frac{L-y}{b}$.
From \eqref{M}, we can obtain
\begin{align}
aN(\sigma,L)=\frac{Dv^\star}{kL}M(\sigma,L)
=\frac{b}{k}M(\sigma,L).
\end{align}
Therefore,
\begin{align}
M(\sigma,L)
=\frac{ka}{b}N(\sigma,L)
=\frac{ka}{b}f(L-a\sigma).
\end{align}
Thus,
\begin{align}
M(x,y)
=&\frac{ka}{b}
f\left(L-a\left(x-\frac{L-y}{b}\right)\right) \notag\\
&+d\int_{x-\frac{L-y}{b}}^x c(y+b(x-s)) \notag\\
&\times f(y+bx-(a+b)s)\,ds . \label{Mcase2}
\end{align}

Now we set \(y=0\). If \(bx\leq L\), then \eqref{Mcase1} gives
\begin{align}
M(x,0)
=&K^{11}(L,bx)
+d\int_0^x c(b(x-s)) \notag\\
&\times f(bx-(a+b)s)\,ds . \label{Mzero1}
\end{align}
Using \eqref{N}, it can be further implied that
\begin{align}
k_0v^\star M(x,0)
=-\left(\gamma p^\star-v^\star\right)f(-ax).
\end{align}
Hence,
\begin{align}
f(-ax)
=&-\frac{k_0v^\star}{\gamma p^\star-v^\star}
\bigg[
K^{11}(L,bx) \notag\\
&+d\int_0^x c(b(x-s))
f(bx-(a+b)s)\,ds
\bigg],\notag \\
\label{Vol1}
\end{align}
where $bx\leq L$.
If \(bx>L\), then \eqref{Mcase2} gives
\begin{align}
M(x,0)
=&\frac{ka}{b}f\left(L-a\left(x-\frac{L}{b}\right)\right) \notag\\
&+d\int_{x-\frac{L}{b}}^x c(b(x-s))
f(bx-(a+b)s)\,ds . \label{Mzero2}
\end{align}
Therefore,
\begin{align}
f(-ax)
=&-\frac{k_0v^\star}{\gamma p^\star-v^\star}
\bigg[
\frac{ka}{b}f\left(L-a\left(x-\frac{L}{b}\right)\right) \notag\\
&+d\int_{x-\frac{L}{b}}^x c(b(x-s))
f(bx-(a+b)s)\,ds
\bigg],
 \label{Vol2}
\end{align}
where $bx>L$.
Equations \eqref{Vol1}--\eqref{Vol2}, together with \eqref{fpositive},
form a Volterra-type integral equation for \(f\) on \((-aL,L)\).
The values of \(f\) on \((0,L)\) are prescribed by \eqref{fpositive},
whereas the values of \(f\) on \((-aL,0)\) are unknown.

Choose 
% \(f^0\in (-aL,L)\) such that
\begin{align}
f^0(y)=K^{12}(L,y),\qquad 0<y<L.
\end{align}

For \(n=0,1,2,\ldots\), set
\begin{align}
f^{n+1}(y)=K^{12}(L,y),\qquad 0<y<L.
\label{iter0}
\end{align}
For \(0<x<L\) with \(bx\le L\), define
\begin{align}
f^{n+1}(-ax)
=&-\frac{k_0v^\star}{\gamma p^\star-v^\star}
\bigg[
K^{11}(L,bx) \notag\\
&+d\int_0^x c(b(x-s))
f^n(bx-(a+b)s)\,ds
\bigg].
\label{iter1}
\end{align}
For \(0<x<L\) with \(bx>L\), define
\begin{align}
f^{n+1}(-ax)
=&-\frac{k_0v^\star}{\gamma p^\star-v^\star}
\bigg[
\frac{ka}{b}
f^n\left(L-a\left(x-\frac{L}{b}\right)\right) \notag\\
&+d\int_{x-\frac{L}{b}}^x c(b(x-s))
f^n(bx-(a+b)s)\,ds
\bigg].
\label{iter2}
\end{align}

For \(\rho>0\), define
\begin{align}
\|f\|_\rho
=
\operatorname*{sup}_{0<x<L}
e^{-\rho x}|f(-ax)|.
\end{align}

For \(0<x<L\) with \(bx<L\), subtracting two consecutive iterations gives
\begin{align}
&f^{n+1}(-ax)-f^n(-ax) \notag\\
=&-\frac{k_0v^\star d}{\gamma p^\star-v^\star}
\int_0^x c(b(x-s))
\Big[
f^n(bx-(a+b)s) \notag\\
&-f^{n-1}(bx-(a+b)s)
\Big]\,ds .
\label{diff1}
\end{align}
For \(0<x<L\) with \(bx>L\), one obtains
\begin{align}
&f^{n+1}(-ax)-f^n(-ax) \notag\\
=&-\frac{k_0v^\star}{\gamma p^\star-v^\star}
\bigg[
\frac{ka}{b}
\Big(
f^n\left(L-a\left(x-\frac{L}{b}\right)\right) \notag\\
&-f^{n-1}\left(L-a\left(x-\frac{L}{b}\right)\right)
\Big) \notag\\
&+d\int_{x-\frac{L}{b}}^x c(b(x-s))
\Big[
f^n(bx-(a+b)s) \notag\\
&-f^{n-1}(bx-(a+b)s)
\Big]\,ds
\bigg].
\label{diff2}
\end{align}

Since \(f^n(y)=f^{n-1}(y)=K^{12}(L,y)\) for \(0<y<L\), the difference vanishes whenever the argument belongs to \((0,L)\).

For the integral terms, if
\[
bx-(a+b)s<0,
\]
then
\[
bx-(a+b)s=-a\left(s-\frac{b(x-s)}{a}\right),
\]
and
\[
x-\left(s-\frac{b(x-s)}{a}\right)
=
\frac{a+b}{a}(x-s).
\]
Therefore,
\begin{align}
&e^{-\rho x}
\left|
f^n(bx-(a+b)s)-f^{n-1}(bx-(a+b)s)
\right| \notag\\
&\leq
e^{-\rho\frac{a+b}{a}(x-s)}
\|f^n-f^{n-1}\|_\rho .
\end{align}
Thus,
\begin{align}
d\|c\|_{L^\infty}
\int_0^x
e^{-\rho\frac{a+b}{a}(x-s)}\,ds
\leq
\frac{d\|c\|_{L^\infty}a}{\rho(a+b)} .
\end{align}

For
$L-a\left(x-\frac{L}{b}\right)
=
-a\left(x-\frac{L}{b}-\frac{L}{a}\right)$,
hence
\begin{align}
&e^{-\rho x}
\left|
f^n\left(L-a\left(x-\frac{L}{b}\right)\right)
-
f^{n-1}\left(L-a\left(x-\frac{L}{b}\right)\right)
\right| \notag\\
&\leq
e^{-\rho\left(\frac{L}{a}+\frac{L}{b}\right)}
\|f^n-f^{n-1}\|_\rho .
\end{align}

Combining the above estimates gives
\begin{align}
\|f^{n+1}-f^n\|_\rho
\leq
\kappa_\rho
\|f^n-f^{n-1}\|_\rho,
\end{align}
where
\begin{align}
\kappa_\rho
=&
\left|
\frac{k_0v^\star}{\gamma p^\star-v^\star}
\frac{ka}{b}
\right|
e^{-\rho\left(\frac{L}{a}+\frac{L}{b}\right)}
\notag\\
&+
\frac{
d|k_0|v^\star \|c\|_{L^\infty}a
}{
(\gamma p^\star-v^\star)\rho(a+b)
}.
\end{align}
Since
\begin{align}
\lim_{\rho\to+\infty}\kappa_\rho=0,
\end{align}
one can choose \(\rho>0\) such that
\begin{align}
\kappa_\rho<1.
\end{align}
Therefore, \(\{f^n\}\) is a Cauchy sequence in \(L^\infty(-aL,0)\), the same contraction estimate applied to two fixed points gives uniqueness.

Therefore, \(N(x,y)\) is uniquely determined by \eqref{Nf}. Then \(M(x,y)\)
is uniquely determined by \eqref{Mcase1}--\eqref{Mcase2}. Hence the kernel
systems \eqref{Mxy}--\eqref{Nleft} have a unique bounded characteristic
solution.

Finally,
\begin{align}
\|N\|_{L^\infty}
\leq
\|f\|_{L^\infty(-aL,L)}
=\bar N .
\end{align}
From \eqref{Mcase1}--\eqref{Mcase2},
\begin{align}
\|M\|_{L^\infty}
&\leq
\max\left\{
\|K^{11}(L,\cdot)\|_{L^\infty},
\left|\frac{ka}{b}\right|\bar N
\right\}
+dL\|c\|_{L^\infty}\bar N \notag \\
&=\bar M .
\end{align}
The proof is complete.
\end{proof}

After obtaining all the kernel functions, we derive the control law 
\begin{align} \label{controller1}
U(t) &= \int_{0}^{L} P(L,y) u(y, t) \mathrm{d} y+\int_{0}^{L} M(L, y) \bar{w}(y, t) \mathrm{d} y \notag \\ 
&+\int_{0}^{L} N(L, y) \bar{v}(y, t) \mathrm{d} y .
\end{align}

Here the design  controller  variables are the kernels $M(x,y), N(x,y), P(x,y)$, obtained sequentially from \eqref{Mxy}-\eqref{Pexplicit}. Hence, once these kernel equations are solved, the controller \eqref{controller1} is fully specified.

Thus, the target system is given by:
\begin{align}
{\alpha }_{t}(x, t) & =-v^{\star} {\alpha }_{x}(x, t), \\ 
{\beta }_{t}(x, t) & =\left(\gamma p^{\star}-v^{\star}\right) {\beta }_{x}(x, t), \\
{\alpha}(0, t) & =-k_{0} {\beta }(0, t), \\
{\beta }(L, t) &=k{\alpha}(L, t) + z(0,t),\\
\frac{D}{L}z_{t} &= z_{x},\\
z(L,t) &=0 .
\end{align}

\subsection{The inverse transformations}
In this subsection, we construct inversion mappings from $(\alpha, \beta, z)$ to $(\bar{w},\bar{v},u)$, which converts the target system \eqref{target1}-\eqref{target2} back to the original system \eqref{origin1}-\eqref{origin2},  which is fundamental for the stability of our control design.

The inverse transformation functions are given by
\begin{align}
\bar{w}  \left ( x,t \right ) &=\alpha \left ( x,t \right ), \label{invermap1} \\
\bar{v} \left ( x,t \right ) &=\beta \left ( x,t \right ) + \int_{0}^{x} L^{11} \left ( x,y \right ) \alpha\left ( y,t \right )dy \notag \\
&+ \int_{0}^{x}L^{12}\left ( x,y \right ) \beta \left ( y,t \right )dy, \label{invermap2}\\
u \left ( x,t \right ) &=z \left ( x,t \right )+ \int_{0}^{x} H(x,y)  z\left ( y,t \right )dy \notag \\
&+ \int_{0}^{L} \Phi   \left ( x,y \right ) \alpha\left ( y,t \right )dy\notag \\
&+ \int_{0}^{L}\Psi \left ( x,y \right ) \beta \left ( y,t \right )dy, \label{invermap3}
\end{align}
where the kernel functions \(L^{11}(x,y)\) and \(L^{12}(x,y)\) are defined on \(\mathcal T_1\). The kernel functions $\Phi  (x, y)$, $\Psi (x, y)$  are defined on the rectangular domain  $\mathcal{T}_{2}=\left\{(x, y) \in \mathbb{R}^{2}: 0 \leq x, y \leq L\right\}$, and $H\in C^1(\mathcal T_1),
\mathcal T_1=\{(x,y)\in[0,L]^2:0\le y\le x\le L\}.$

Differentiating the mappings \eqref{invermap1}-\eqref{invermap2} with respect to space and time, performing integration by parts and applying the boundary conditions, the kernel functions $L^{1j}(x, y)$, $j=1,2$ should satisfy
\begin{align}
&\left(\gamma p^{\star}-v^{\star}\right)L_{x}^{11}(x,y)-v^{\star}L_{y}^{11}(x,y)=0,\\
&\left(\gamma p^{\star}-v^{\star}\right)L_{x}^{12}(x,y)+\left(\gamma p^{\star}-v^{\star}\right)L_{y}^{12}(x,y) =0.
\end{align}
The boundary conditions of the kernel functions $L^{1j}(x,y)$, $j=1,2$ are given as follows
\begin{align}
L^{11}(x,x) &=-\frac{c(x)}{\gamma p^{\star}}, \\
L^{12}(x,0) &=-\frac{k_{0}v^{\star}}{\gamma p^{\star}-v^{\star}}L^{11}(x,0) .
\end{align}
Notice that they can be written as two separate $2 \times 2$ hyperbolic systems for $L^{11}$ and $L^{12}$. The well-posedness of the two kernels can be proved according to \cite{vazquez2011backstepping}.

Then by differentiating the mapping \eqref{invermap3} with respect to space and time, and let $\varrho =\gamma p^\star-v^\star,  \vartheta=\frac{D}{L}$. 
Then the inverse kernels \(\Phi  (x,y)\), \(\Psi (x,y)\), and \(H(x,y)\) should satisfy
\begin{align}
\Psi _x(x,y)+\vartheta \varrho  \Psi _y(x,y)&=0,
\label{invBxy}\\
\Phi  _x(x,y)-\vartheta v^\star \Phi  _y(x,y)&=0,
\label{invAxy}\\
H_x(x,y)+H_y(x,y)&=0,
\label{invHxy}
\end{align}
with boundary conditions
\begin{align}
v^\star \Phi  (x,L)&=k\varrho  \Psi (x,L),
\label{invABtop}\\
H(x,0)&=\vartheta \varrho  \Psi (x,L),
\label{invH0}\\
\Phi(0,y)&=L^{11}(L,y), \label{eq:Phi_left_bc}\\
\Psi(0,y)&=L^{12}(L,y),\\
k_0v^\star \Phi  (x,0)+\varrho  \Psi (x,0)&=0.
\label{invABbottom}
\end{align}

The well-posedness of \eqref{invermap1}–\eqref{invermap3} can be proved using a method similar to that used in the proof of Theorem \ref{th3} above. Since the inverse kernel equations have the same transport structure and bounded boundary data, their bounded characteristic solutions follow by the same characteristic argument.

Hence, the controller can also be written as
\begin{align}
    U(t)&= \int_{0}^{L} H(L,y)  z\left ( y,t \right )dy + \int_{0}^{L} \Phi   \left ( L,y \right ) \alpha\left ( y,t \right )dy \notag\\
    &+ \int_{0}^{L}\Psi \left ( L,y \right ) \beta \left ( y,t \right )dy. \label{controller2}
\end{align}

\subsection{Stability analysis}
In this subsection, we provide a stability analysis of the original closed-loop system \eqref{orin}-\eqref{output} under the controller \eqref{controller1}. In the previous subsection, we have performed inverse transformations and established the equivalence between the target system \eqref{target1}-\eqref{target2} and the original closed-loop system. These developments  provide the basis for   stability analysis of the closed-loop systems.

More specifically, after converting the input delay into a transport PDE and applying the backstepping transformation, the closed-loop target system can be viewed as an interconnection of boundary-coupled transport subsystems. 
The ISS small-gain approach is particularly suitable here because it exploits the boundary-interconnection structure of the $\alpha-$ and $\beta-$ transport subsystems, represents the proximal reflection and delay-state couplings through subsystem gains, and reduces the stability analysis to the explicit and verifiable condition, which directly yields exponential stability of the interconnected target system. In contrast, a direct Lyapunov analysis generates boundary trace cross terms such as $\alpha(L,t)z(0,t)$, thereby increasing conservatism and making the resulting stability criterion substantially less transparent.

To begin with, we analyze the stability of the target system \eqref{target1}-\eqref{target2} under the controller \eqref{controller1}.

\begin{theorem}
Consider the closed-loop system consisting of system dynamics and boundary conditions \eqref{target1}-\eqref{target2} under controller \eqref{controller1}. If  $|kk_0|<1$, then for any initial conditions $\alpha(\cdot,0) ,\beta(\cdot,0),z(\cdot,0)\in \mathcal{L} ^{\infty}(0,L)$, the closed-loop system is exponentially stable in the sense that there exist positive constants $\tilde{C}$ and $\lambda$ such that
\begin{align}
\Omega (t)\leq \tilde{C} e^{-\lambda t}\Omega (0),
\end{align}
where $t >0$ and 
\begin{align}
\Omega (t)=\|\alpha(\cdot,t) \|_{L^\infty}+\|\beta(\cdot,t) \|_{L^\infty}+\|z(\cdot,t) \|_{L^\infty}.
\end{align}
\end{theorem}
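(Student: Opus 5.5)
We use the explicit transport structure of the closed-loop target system, in which $z(L,t)=0$ by the controller \eqref{controller1}. We then close the loop between the $\alpha$- and $\beta$-subsystems with an ISS small-gain argument in the exponentially weighted norm $\|\cdot\|_{\mu,t}$ from the Notation.

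\textbf{Step 1: the $z$-subsystem.} Since $\frac{D}{L}z_t=z_x$ with $z(L,t)=0$, the method of characteristics gives $z(x,t)=z(x+\frac{L}{D}t,0)$ when $x+\frac{L}{D}t\le L$, and $z(x,t)=0$ otherwise. Hence $z(\cdot,t)\equiv 0$ for $t\ge D$, and $\|z(\cdot,t)\|_{L^\infty}\le \|z(\cdot,0)\|_{L^\infty}$ for all $t$. Consequently, for every $\mu>0$,
\[
\|z(0,\cdot)\|_{\mu,t}\le e^{\mu D}\|z(\cdot,0)\|_{L^\infty},
\]
where the weighted norm is read for the boundary trace. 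The input $z(0,t)$ entering \eqref{targetbound} is therefore an exponentially vanishing, bounded disturbance.

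\textbf{Step 2: ISS estimates for each subsystem.} Write $\lambda_1=v^\star$ and $\lambda_2=\gamma p^\star-v^\star$.

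For $\alpha$, which is transported rightward with inflow \eqref{targetbound1}, characteristics give $\alpha(x,t)=\alpha(x-\lambda_1 t,0)$ for $t<x/\lambda_1$, and $\alpha(x,t)=-k_0\beta(0,t-x/\lambda_1)$ otherwise. Multiplying by $e^{\mu t}$ yields
\[
\|\alpha\|_{\mu,t}\le e^{\mu L/\lambda_1}\|\alpha(\cdot,0)\|_{L^\infty}+|k_0|e^{\mu L/\lambda_1}\sup_{s\le t}e^{\mu s}|\beta(0,s)|.
\]
The same argument for $\beta$, which is transported leftward with inflow \eqref{targetbound}, gives
\[
\|\beta\|_{\mu,t}\le e^{\mu L/\lambda_2}\|\beta(\cdot,0)\|_{L^\infty}+e^{\mu L/\lambda_2}\big(|k|\sup_{s\le t}e^{\mu s}|\alpha(L,s)|+\sup_{s\le t}e^{\mu s}|z(0,s)|\big).
\]
The boundary traces are bounded by the corresponding weighted sup norms; this is justified rigorously along characteristics, or via the weighted sup over $x$ and $s$.

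\textbf{Step 3: small gain.} This is the main obstacle, namely ensuring that the weighted loop gain stays below one. Substituting the $\beta$-estimate into the $\alpha$-estimate gives a loop gain of $|kk_0|e^{\mu L(1/\lambda_1+1/\lambda_2)}$. Because $|kk_0|<1$, we can choose $\mu>0$ small enough that
\[
|kk_0|e^{\mu L(1/\lambda_1+1/\lambda_2)}<1.
\]
The cleanest way to avoid circularity is to estimate the traces $\beta(0,\cdot)$ and $\alpha(L,\cdot)$ directly. Tracing one full round trip gives
\[
\beta(0,t)=k\alpha(L,t-L/\lambda_2)+z(0,t-L/\lambda_2),\qquad \alpha(L,s)=-k_0\beta(0,s-L/\lambda_1).
\]
This is a difference equation with coefficient $-kk_0$ plus a forcing term that vanishes after time $D$. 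Solving it for $\|\beta(0,\cdot)\|_{\mu,t}$ gives a bound of the form $C(\mu)\,\Omega(0)/(1-|kk_0|e^{\mu L(1/\lambda_1+1/\lambda_2)})$. All the weighted norms are finite on $[0,t]$ for solutions in $L^\infty$, so the absorption step is legitimate.

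\textbf{Step 4: conclusion.} Plugging the trace bound back into the estimates of Steps 1 and 2 gives $\|\alpha\|_{\mu,t}+\|\beta\|_{\mu,t}+\|z\|_{\mu,t}\le\tilde C\,\Omega(0)$ with $\tilde C$ independent of $t$. By the inequality in the Notation, $\Omega(t)\le \tilde C e^{-\lambda t}\Omega(0)$ with $\lambda=\mu$.
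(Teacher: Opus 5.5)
Your proposal is correct and follows essentially the same route as the paper. Both arguments derive weighted-sup ISS estimates for the $\alpha$- and $\beta$-transport subsystems with gains $|k_0|e^{\mu L/v^\star}$ and $|k|e^{\mu L/(\gamma p^\star-v^\star)}$, choose $\mu$ small so that the loop gain stays below one, and use that $z$ vanishes after time $D$. Your round-trip difference equation for the trace $\beta(0,\cdot)$, together with your remark that the weighted norms are finite on $[0,t]$, spells out the absorption step that the paper simply delegates to the ISS small-gain theorem.
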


\begin{proof}
Based on the target system \eqref{target1}-\eqref{target2}, we have
\begin{align}
\alpha_{t}+v^{\star} \alpha_{x}&=0,\\
\beta_{t}&=\varrho   \beta_{x},
\end{align}
where $\varrho  =\gamma p^{\star}-v^{\star}$. By using method of characteristics, we can obtain the following solutions for $\alpha(x,t)$
\begin{align}
\alpha(x, t)=\left\{\begin{array}{ll}
\alpha\left(x-v^{\star} t, 0\right), & 0\le t<\frac{x}{v^{\star}}, \\
\alpha\left(0, t-\frac{x}{v^{\star}}\right), &  t \geq \frac{x}{v^{\star}},
\end{array}\right.
\end{align}
and the following solutions for $\beta(x,t)$
\begin{align}
\beta\left(x, t\right)=\left\{\begin{array}{ll}
\beta\left(x+\varrho  t,0\right),  &0\le t<\frac{L-x}{\varrho  }, \\
\beta\left(L,t+\frac{x-L}{\varrho  }\right),  &t \geq \frac{L-x}{\varrho  }.
\end{array}\right.
\end{align}
Furthermore, by applying the boundary condition \eqref{targetbound1} to the target system \eqref{target1}-\eqref{target2}, it yields the following solution for $\alpha(x,t)$
\begin{align}\label{eq-sol-alpha}
\alpha(x, t)=\left\{\begin{array}{ll}
\alpha\left(x-v^{\star} t, 0\right), & 0\le t<\frac{x}{v^{\star}} ,\\
-k_{0} \beta\left(0, t-\frac{x}{v^{\star}}\right), &  t \geq \frac{x}{v^{\star}}.
\end{array}\right.
\end{align}
Similarly, by applying the boundary condition \eqref{targetbound}, it yields the following solution for $\beta(x,t)$
\begin{align}
\beta\left(x, t\right)=\left\{\begin{array}{ll}
\beta\left(x+\varrho  t,0\right),  &0\le t<\frac{L-x}{\varrho  }, \\
k \alpha\left(L, t+\frac{x-L}{\varrho  }\right)\notag \\
+z\left(0,t+\frac{x-L}{\varrho  }\right),  &t \geq \frac{L-x}{\varrho  }.
\end{array}\right. \label{beta}
\end{align}

Next, we aim to prove that  $\alpha(x,t)$ and $\beta(x,t)$ are both input-to-state stable. 
First, we estimate the bound of  $\|\alpha(\cdot, t)\|_{\infty}$. When $t<\frac{x}{v^{\star}}$,  the solution \eqref{eq-sol-alpha} yields that
\begin{align}
|\alpha(x, t)|=&\left|\alpha\left(x-v^{\star} t, 0\right)\right|\nonumber\\
\leq&   \| \alpha(\cdot, 0) \|_{L^\infty}\nonumber\\
\leq& e^{-\lambda (t-\frac{x}{v^\star})} \| \alpha(\cdot, 0) \|_{L^\infty}\nonumber\\
\leq& \varrho  _0e^{-\lambda t} \| \alpha(\cdot, 0) \|_{L^\infty},
\end{align}
where $\lambda>0$ and 
\begin{align}
 \varrho  _0=  e^{\lambda\frac{L}{v^\star}}.
\end{align}
When $t \geq \frac{x}{v^{\star}}$, the solution \eqref{eq-sol-alpha} yields that
\begin{align}
|\alpha(x, t)|&=\left|-k_{0} \beta\left(0, t-\frac{x}{v^{\star}}\right)\right| \notag \\
&\leq k_{0} \left\| \beta\left(\cdot, t-\frac{x}{v^{\star}}\right)\right\|_{L^\infty} \notag \\
&\leq k_{0} \sup _{0 \leq s \leq t}\| \beta(\cdot, s) \|_{L^\infty}.
\end{align}
From the characteristic formulas and the boundary conditions, there exist
\[
C_\alpha=e^{\frac{\mu L}{v^\star}},
\qquad
C_\beta=e^{\frac{\mu L}{\gamma p^\star-v^\star}},
\]
such that
\begin{align}
\|\alpha\|_{\mu,t}
&\leq
C_\alpha \|\alpha(\cdot,0)\|_{L^\infty}
+
C_\alpha |k_0|\|\beta\|_{\mu,t},
\label{alphaweighted}\\
\|\beta\|_{\mu,t}
&\leq
C_\beta \|\beta(\cdot,0)\|_{L^\infty}
+
C_\beta |k|\|\alpha\|_{\mu,t}
+
C_\beta \|z\|_{\mu,t}.
\label{betaweighted}
\end{align}
Choose \(\mu>0\) sufficiently small such that
\begin{align}
C_\alpha C_\beta |kk_0|<1.
\label{weightedgain}
\end{align}
This is possible under the small-gain condition \(|kk_0|<1\). Applying the
ISS small-gain theorem to \eqref{alphaweighted}--\eqref{betaweighted} gives
\begin{align}
\|\alpha\|_{\mu,t}+\|\beta\|_{\mu,t}
&\leq
C_\mu
(
\|\alpha(\cdot,0)\|_{L^\infty}
+
\|\beta(\cdot,0)\|_{L^\infty} \notag \\
&+
\|z\|_{\mu,t}
),
\label{abweighted}
\end{align}
for some \(C_\mu>0\).

Moreover, from \eqref{targetz} and \eqref{target2} under the controller
\eqref{controller1}, \(z\) satisfies
\[
\frac{D}{L}z_t=z_x,\qquad z(L,t)=0.
\]
Hence \(z(\cdot,t)\) vanishes after time \(D\), and
\begin{align}
\|z(\cdot,t)\|_{L^\infty}
\leq
e^{\mu D}e^{-\mu t}\|z(\cdot,0)\|_{L^\infty}.
\label{zexp}
\end{align}
Equivalently,
\begin{align}
\|z\|_{\mu,t}
\leq
e^{\mu D}\|z(\cdot,0)\|_{L^\infty}.
\end{align}
Combining this estimate with \eqref{abweighted}, we obtain
\begin{align}
&\|\alpha(\cdot,t)\|_{L^\infty}
+
\|\beta(\cdot,t)\|_{L^\infty}
+
\|z(\cdot,t)\|_{L^\infty}
\leq
\tilde C e^{-\mu t}\notag \\
&\left(
\|\alpha(\cdot,0)\|_{L^\infty}
+
\|\beta(\cdot,0)\|_{L^\infty}
+
\|z(\cdot,0)\|_{L^\infty}
\right),
\end{align}
for some \(\tilde C>0\).
The proof is complete.
\end{proof}

Ultimately, due to the equivalence between the closed-loop system \eqref{orin}-\eqref{output} and target system \eqref{target1}-\eqref{target2}, as shown in the previous subsection, the following main theorem is established.
\begin{theorem}
Consider the closed-loop system consisting of system dynamics \eqref{orin}-\eqref{orin1} and delay transport PDEs \eqref{u}-\eqref{ubond} with boundary conditions \eqref{bond1}-\eqref{output} and controller \eqref{controller1}. If $|kk_0|<1$, then for any initial conditions $\bar{w}(\cdot,0) ,\bar{v}(\cdot,0) ,u(\cdot,0) \in \mathcal{L}^{\infty}(0,L)$, the closed-loop system is exponentially stable in the sense that there exist positive constants $\tilde{c}$ and $\lambda$ such that 
\begin{align}
\Psi (t)\leq \tilde{c} e^{-\lambda t}\Psi (0),
\end{align}
where 
\begin{align}
\Psi (t)=\|\bar{w}(\cdot,t) \|_{L^\infty}+\|\bar{v}(\cdot,t) \|_{L^\infty}+\|u(\cdot,t) \|_{L^\infty}.
\end{align}
\end{theorem}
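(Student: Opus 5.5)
The plan is to transfer the exponential estimate of the preceding theorem (for the target system \eqref{target1}--\eqref{target2}) to the original closed loop. The transfer uses the bounded invertibility of the backstepping transformation \eqref{map1}--\eqref{map3}, with inverse \eqref{invermap1}--\eqref{invermap3}, in $L^\infty$. Concretely, we show there exist constants $c_1,c_2>0$ with
\begin{align*}
\Omega(t)\le c_1\Psi(t),\qquad \Psi(t)\le c_2\Omega(t),\qquad t\ge 0.
\end{align*}
Then $\Psi(t)\le c_2\Omega(t)\le c_2\tilde C e^{-\lambda t}\Omega(0)\le c_1c_2\tilde C e^{-\lambda t}\Psi(0)$, so the theorem holds with $\tilde c=c_1c_2\tilde C$ and the same $\lambda$.

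For the direct bound, start from \eqref{map1}--\eqref{map3}:
\begin{itemize}
\item $\alpha=\bar w$ trivially.
\item For $\beta$, $\|\beta(\cdot,t)\|_{L^\infty}\le \|\bar v\|_{L^\infty}+L(\|K^{11}\|_{L^\infty}\|\bar w\|_{L^\infty}+\|K^{12}\|_{L^\infty}\|\bar v\|_{L^\infty})$. This uses boundedness of $K^{1j}$ on $\mathcal T_1$, which the paper takes from the argument of \cite{vazquez2011backstepping}.
\item For $z$, $\|z(\cdot,t)\|_{L^\infty}\le(1+L\|P\|_{L^\infty})\|u\|_{L^\infty}+L\bar M\|\bar w\|_{L^\infty}+L\bar N\|\bar v\|_{L^\infty}$. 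This uses Theorem \ref{th3} for $\bar M,\bar N$, and \eqref{Pexplicit}, which gives $\|P\|_{L^\infty}\le \frac{Dv^\star}{kL}\bar M$.
\end{itemize}
Summing these yields $c_1$.

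The converse bound follows the same way from \eqref{invermap1}--\eqref{invermap3}. It needs boundedness of $L^{11},L^{12}$, which again follows from \cite{vazquez2011backstepping}, and of $\Phi,\Psi,H$. For $\Phi,\Psi$ I would repeat the characteristic and successive-approximation argument of Theorem \ref{th3}. Their system \eqref{invBxy}--\eqref{invABbottom} is even simpler: it has no source coupling, so $\Psi(x,y)=g(y-\vartheta\varrho x)$ with $g$ given on $[0,L]$ by $L^{12}(L,\cdot)$. On $[-\vartheta\varrho L,0)$, $g$ is fixed through the reflections \eqref{invABtop} and \eqref{invABbottom}, together with $\Phi$ transported from \eqref{eq:Phi_left_bc}. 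Since each reflection shifts the argument by a positive amount, a finite recursion (or the weighted-norm contraction with $\rho$ large) gives a bounded solution. Then $H(x,y)=\vartheta\varrho\Psi(x-y,L)$ is bounded as well.

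The main obstacle is that these kernels are only bounded (possibly discontinuous across characteristic lines), not $C^1$. So the transformations must be understood for solutions in the characteristic (broad) sense, with both compositions equal to the identity. I would handle this by noting that the integral operators are bounded on $L^\infty$ and that the direct and inverse maps are mutually inverse Volterra/Fredholm-type operators on $L^\infty(0,L)^3$. Their composition is the identity on smooth compatible data by construction, and it extends by density to the relevant solution class. Well-posedness of the closed loop in $L^\infty$ then follows from that of the target system, which has explicit characteristic solutions. Given this, the two-sided norm equivalence closes the proof.
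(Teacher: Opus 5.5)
Your proposal is correct and follows the paper's route: the paper proves this theorem in one line by invoking the equivalence of the original closed loop and the target system under the invertible transformations \eqref{map1}--\eqref{map3} and \eqref{invermap1}--\eqref{invermap3}, and your two-sided $L^\infty$ norm equivalence with $\tilde c=c_1c_2\tilde C$ spells out what that sentence leaves implicit. One small repair: smooth functions are not norm-dense in $L^\infty(0,L)$, so either run your density argument in $L^2(0,L)$, where the bounded-kernel integral operators are also continuous, or avoid it altogether by noting that the direct map is block lower-triangular with diagonal blocks of the form identity minus a Volterra operator with bounded kernel, and is therefore boundedly invertible on $L^\infty(0,L)^3$ by a Neumann series.
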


It shows the global exponential stability of the linearized model, while the implication for the nonlinear ARZ system is local.

\begin{remark}
To express the condition $|k k_0|<1$ in terms of physically interpretable traffic parameters, we start from the Greenshields relation
$p^\star = v_f\left(\frac{\rho^\star}{\rho_m}\right)^\gamma.$
It is convenient to introduce the dimensionless quantities
$r = \left(\frac{\rho^\star}{\rho_m}\right)^\gamma = \frac{p^\star}{v_f},
\alpha = \frac{L}{\tau v_f},$
where $r$ characterizes the normalized operating point and $\alpha$ represents the ratio between the controlled-road length and the relaxation length scale $\tau v_f$. Under these notations, the equilibrium quantities can be rewritten as
$p^\star = v_f r,
v^\star = v_f - p^\star = v_f(1-r).$
Substituting these identities into \eqref{k} and \eqref{k0}, respectively, yields
$k = \exp\!\left(-\frac{L}{\tau v^\star}\right)
  = \exp\!\left(-\frac{\alpha}{1-r}\right),
k_0 = \frac{\gamma p^\star-v^\star}{v^\star}
    = \frac{\gamma v_f r-v_f(1-r)}{v_f(1-r)}
    = \frac{(\gamma+1)r-1}{1-r}.$
Consequently, the small-gain condition can be expressed as
$
|k k_0|
=
\exp\!\left(-\frac{\alpha}{1-r}\right)
\left|\frac{(\gamma+1)r-1}{1-r}\right|.
$
In the congested regime, namely
$
r>\frac{1}{\gamma+1},
$
the above expression therefore simplifies to
$
|k k_0|
=
e^{-\alpha/(1-r)}
\frac{(\gamma+1)r-1}{1-r}.
$
This derivation makes explicit how the feasibility condition depends jointly on the operating point $r$, the sensitivity coefficient $\gamma$, and the dimensionless quantity $\alpha$.
Therefore, the condition $|k k_0|<1$ should be interpreted as a practical feasibility condition for representative congested freeway settings rather than a universal property.
\end{remark}

\section{Simulation examples} \label{4}

This section provides two simulations to validate the effectiveness of the proposed controller. First, we compare the proposed delay compensation controller and the controller without compensation in \cite{yu2019traffic} on the original nonlinear ARZ model with delayed boundary actuation. Second, we perform a data-informed nonlinear validation using NGSIM freeway trajectory data to calibrate the operating condition and model parameters.

\begin{figure}[htbp]
    \centering
    \begin{subfigure}[b]{0.8\linewidth}
        \includegraphics[width=\linewidth]{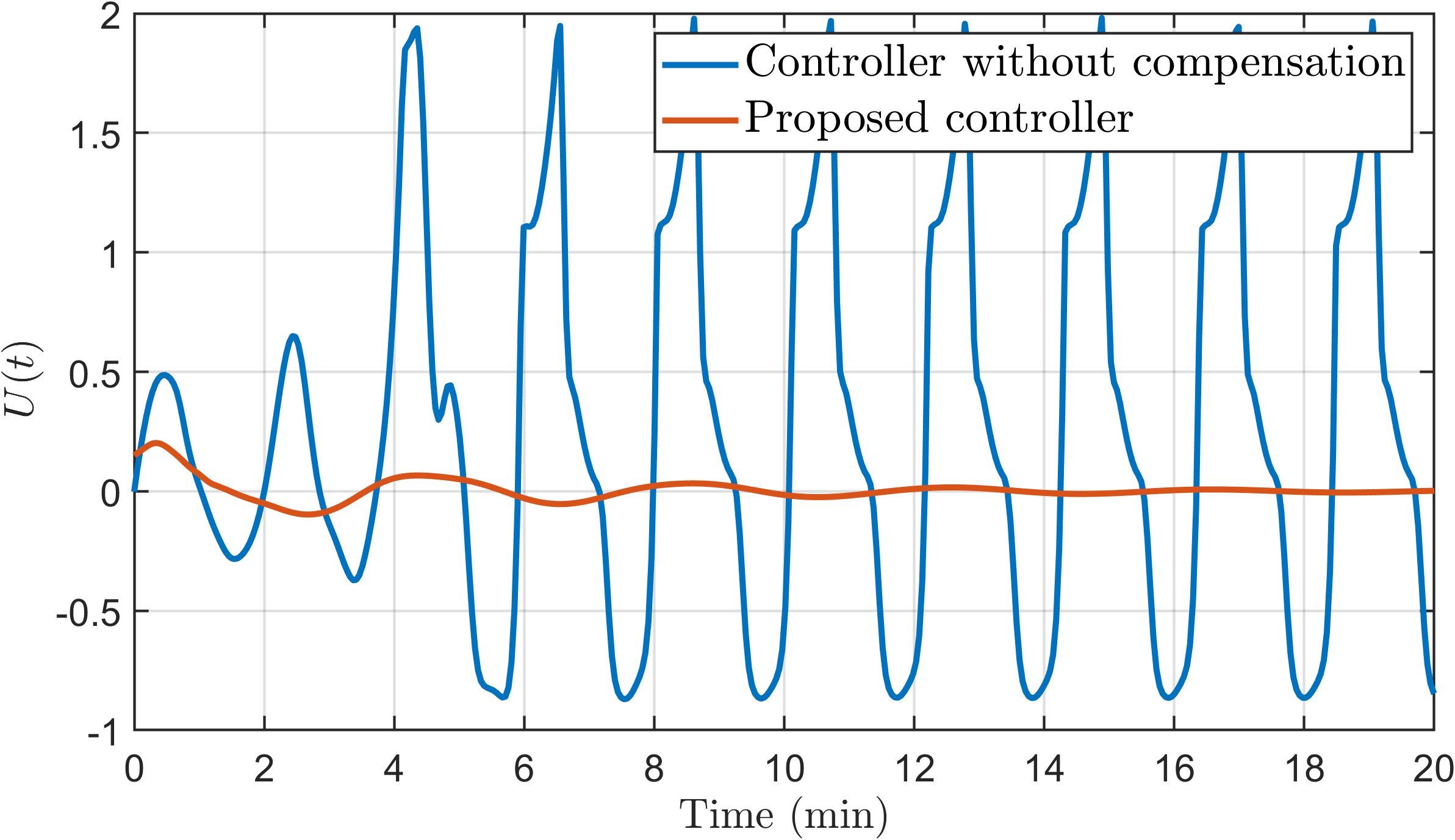}
        \caption{Evaluation of control input $U(t)$}
        \label{uc}
    \end{subfigure}
    \hfill
    \begin{subfigure}[b]{0.8\linewidth}
        \includegraphics[width=\linewidth]{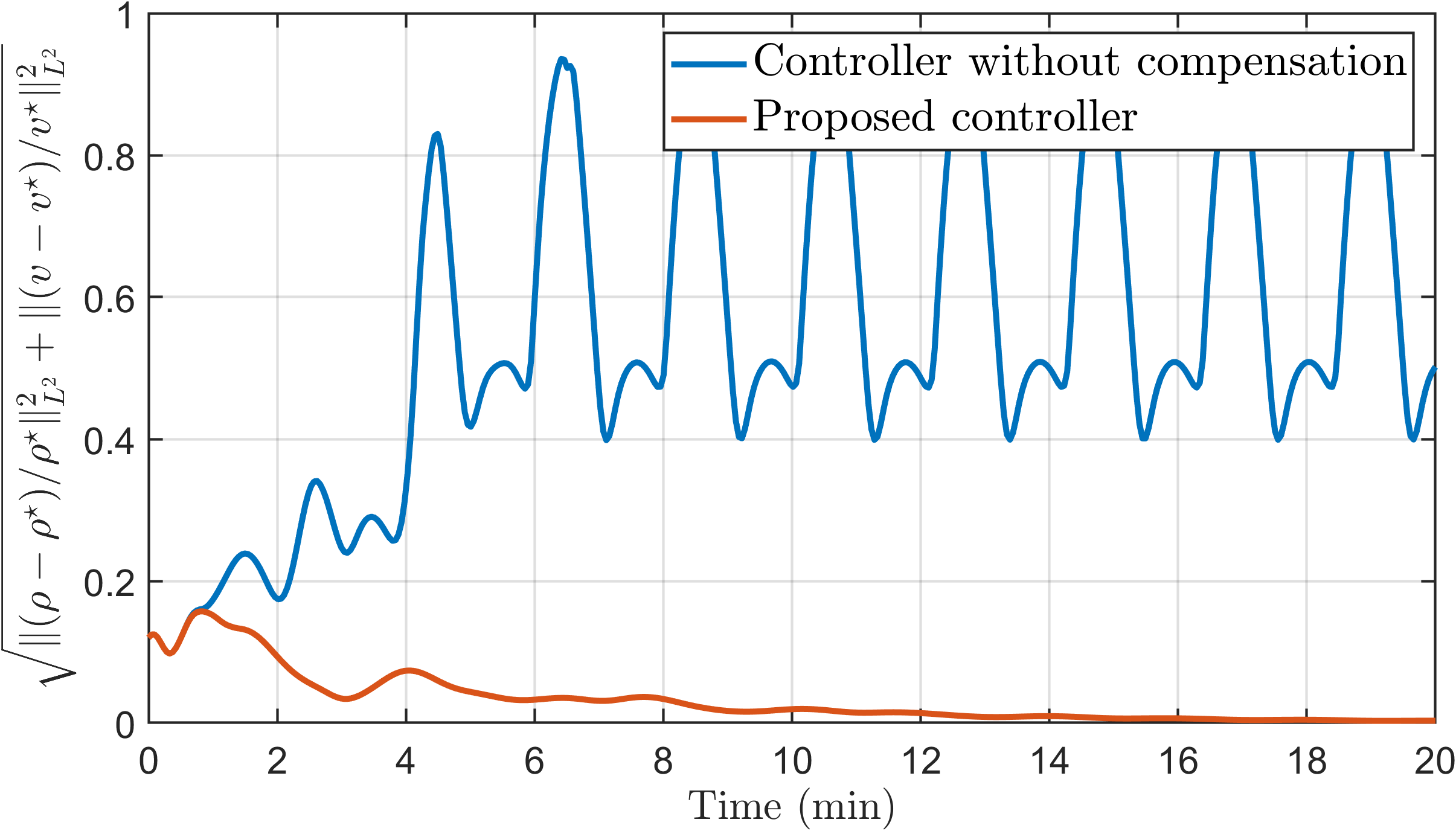}
        \caption{Evaluation of the state norm}
        \label{nc}
    \end{subfigure}
    \caption{Comparison of the controller without compensation \cite{yu2019traffic} and the proposed controller on the original nonlinear delayed ARZ model}
    \label{fig:comparison-normcontrol}
\end{figure}

\subsection{Comparison simulation}

We implement the ARZ model with the following parameters: $\gamma=1$, the length of freeway section is set to $L=1 \mathrm{~km}$, the free speed is  $\bar{v}_{f}=40 \mathrm{~m} / \mathrm{~s}$  and the maximum density is  $\rho_{~m}=150 \mathrm{~vehicles} / \mathrm{~km}$. The steady state  $\left(\rho^{\star}, v^{\star}\right)$  is chosen as $( 120 \mathrm{~vehicles} / \mathrm{~km}, 8 \mathrm{~m} / \mathrm{~s}  )$, which corresponds to the congested regime. The relaxation time is set to  $\tau=60 \mathrm{~s}$ and the input delay is set to $D=40\mathrm{~s}$. Sinusoidal initial conditions are employed.

\begin{figure}[htbp]
    \centering
    \begin{subfigure}[b]{0.8\linewidth}
        \includegraphics[width=\linewidth]{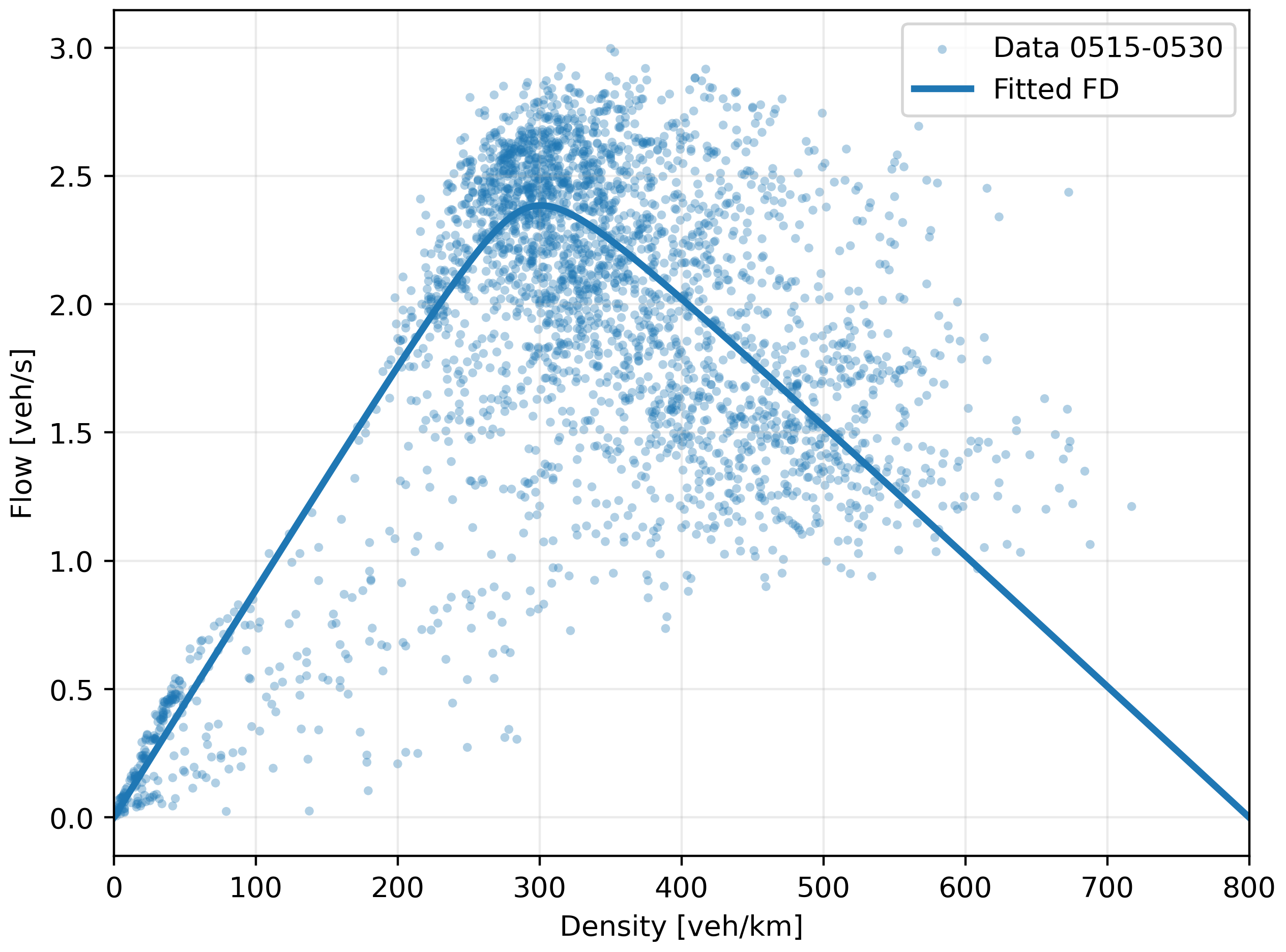}
        \caption{Density-flow scatter and fitted fundamental diagram }
        \label{ngsim}
    \end{subfigure}
    \hfill
    \begin{subfigure}[b]{0.8\linewidth}
        \includegraphics[width=\linewidth]{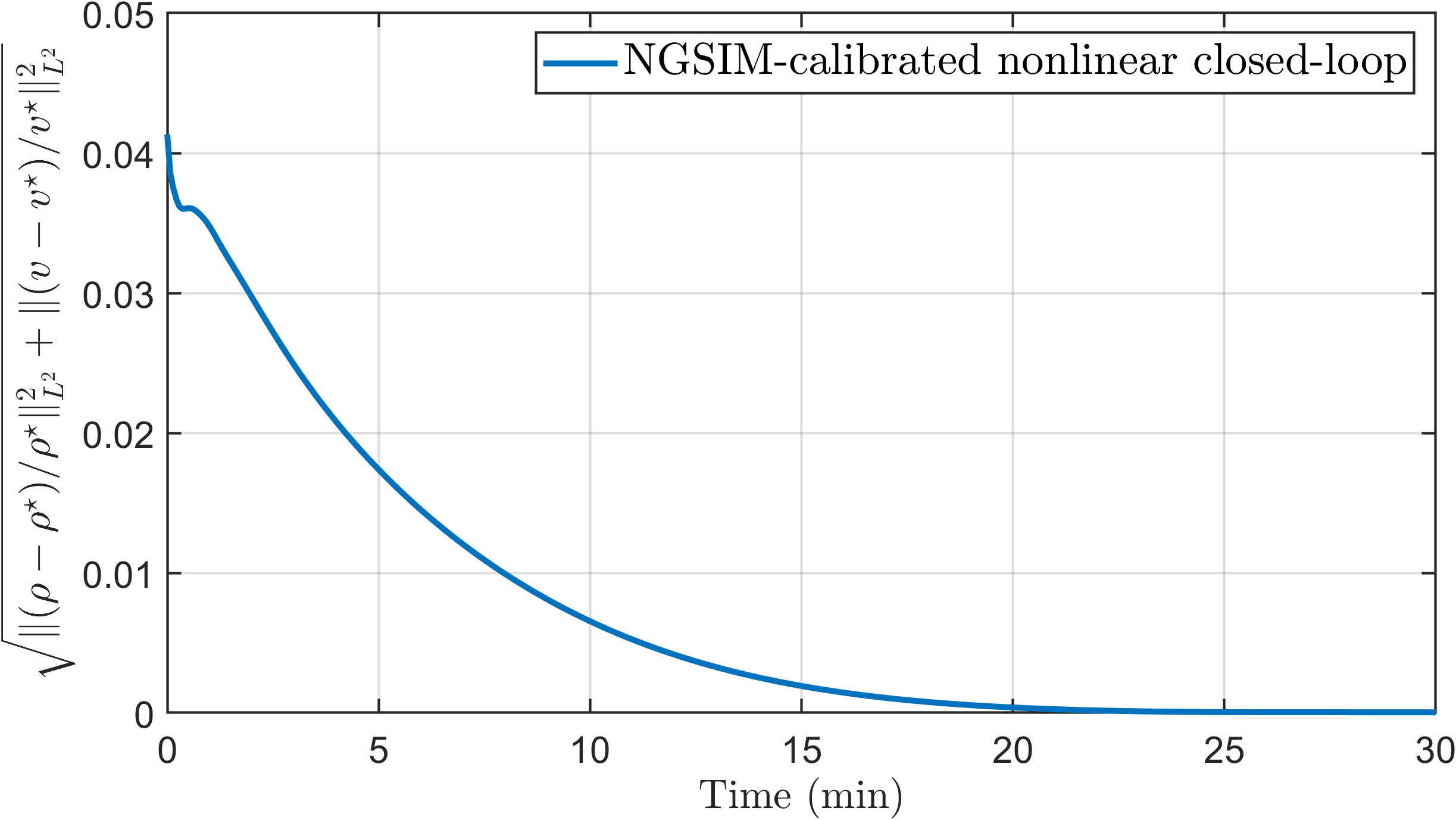}
        \caption{Evaluation of the state norm}
        \label{ngn}
    \end{subfigure}
    \caption{NGSIM validation of the proposed controller on the original nonlinear delayed ARZ model}
    \label{fig:ngsim}
\end{figure}

 We conducted a comparative experiment using two controllers. The first is the proposed controller. The second is the controller without compensation proposed in \cite{yu2019traffic}, serving as a delay-uncompensated benchmark. 
 As shown in Figure \ref{fig:comparison-normcontrol}(\subref{uc}) and Figure \ref{fig:comparison-normcontrol}(\subref{nc}), the controller without compensation yields pronounced oscillations, whereas the proposed controller suppresses the transient much more rapidly and drives the nonlinear state to a small neighborhood of the equilibrium within a  short time. This comparison shows that explicit compensation of the input delay is essential when the delayed actuation is applied to the traffic-flow-control setting.  This demonstrates the necessity of explicit delay compensation and indicates that the proposed controller can effectively mitigate congestion in the considered delayed ARZ setting and drive the state toward the desired equilibrium.

\subsection{Validation on freeway data}
The widely recognized Next Generation Simulation (NGSIM) dataset \cite{dot2018next} is adopted, focusing on vehicle trajectory records collected from a segment of Interstate 80 (I-80) in Emeryville, California, on April 13, 2005. This dataset offers microscopic positions and velocities of individual vehicles at a temporal resolution of 0.1 seconds. Trajectories extracted during the evening peak period (5:15 PM-5:30 PM) are used, which feature pronounced stop‑and‑go traffic oscillations. The macroscopic physical domain is defined with a length of $L=500~\mathrm{m}$. Given that the freeway section consists of six lanes, the maximum achievable physical density is strictly capped at $\rho_m=800~\mathrm{veh/km}$.

The trajectory data are first aggregated to reconstruct macroscopic traffic variables based on the calibration method in \cite{yu2020pde} and then used to calibrate the equilibrium traffic relation and the operating condition of the ARZ model. The delay time is set as $D=6 \mathrm{~s}$. Figure \ref{fig:ngsim}(\subref{ngsim}) shows the resulting density-flow scatter together with the fitted fundamental diagram. The proposed controller is then applied to the calibrated original nonlinear delayed ARZ model. Figure \ref{fig:ngsim}(\subref{ngn}) shows the evolution of the corresponding state norm. The norm decreases rapidly after the initial transient, indicating that the controller remains effective under data-informed parameter values and realistic congestion patterns. The simulation results show that the original nonlinear system is locally stable under the proposed controller. 

\section{Conclusion} \label{5}
This paper proposes a backstepping-based controller for traffic congestion control using the ARZ model, considering arbitrarily large input delays. The ARZ model characterizes a $2 \times 2$ hyperbolic PDE system with proximal boundary reflection, which presents substantial challenges when coupled with input delay. 
To address these challenges, we developed a control framework comprising three key components. First, we transform the arbitrarily large input delay into a transport PDE. This transformation converts the original  $2 \times 2$ hyperbolic PDE system with arbitrarily large input delay into an equivalent $3 \times 3$ hyperbolic PDE system without delay. 
Second, a novel backstepping transformation and target system are designed. 
A technical contribution of this work is the well-posedness analysis of the kernel equations associated with the delay-compensating transformation. For hyperbolic PDEs with delays, the kernel function is subject to two boundary conditions. Consequently, the well-posedness analysis should be performed separately from the characteristic line to each of the two boundaries.
We apply the ISS small-gain theorem for stability analysis. It handles the strongly coupled terms from proximal reflection more effectively than conventional Lyapunov-based analysis. 
Finally, two validation simulations are provided. The first one compares the proposed controller with the controller without compensation on the original nonlinear delayed ARZ model, showing a clear advantage of delay compensation. The second simulation validates the proposed controller using real traffic data. These simulations are performed on the original nonlinear system, demonstrating the effectiveness of the proposed controller.
Future work will consider observer-based output-feedback controller design under limited measurements, together with the associated estimation-error analysis. Future work will also investigate robustness and disturbance-rejection extensions in the presence of distributed and boundary disturbances, as well as measurement noise.

\section{Disclosure statement}
No potential conflict of interest was reported by the author(s).

\section{Funding}
This work was supported by the National Natural Science Foundation of China under Grants 62373314 and the Research Grants Council of the Hong Kong Special Administrative Region of China under Grant CityU/11210424.

\section{Data availability statement}
The authors confirm that the data supporting the findings of this study are available within the article or its supplementary materials.

 \bibliographystyle{elsarticle-num}
 \bibliography{Refs}
 
\end{document}